\definecolor{boxcolor}{HTML}{B9DCFF}
\pgfplotsset{compat=1.18} 
\newtheorem{theorem}{Theorem}
\newtheorem{cor}{Corollary}
\newtheorem{rek}{Remark}
\newtheorem{prop}{Proposition}
\title{Dependence Structure and Epidemic Outcomes in Heterogeneous SIR Models}
\author{Mohamed El Khalifi\\
\small Department of Mathematics, Dhar El Mahraz Faculty of Science, Sidi Mohamed Ben Abdellah University\\
\small Atlas Fez, BP 1796, Fez, Morocco\\
\small \texttt{mohamed.elkhalifi1@usmba.ac.ma}}
\begin{document}
\maketitle

\begin{abstract}
We study a well-mixed SIR epidemic model with heterogeneous susceptibility and infectivity, allowing for an arbitrary joint distribution of these traits. Using an exact final-size formulation and a branching–process approximation for early epidemic dynamics, we show that both the final epidemic size and the probability of a major outbreak are monotone with respect to the concordance order of the joint susceptibility–infectivity distribution. In particular, among all couplings with fixed marginal trait distributions, comonotonic dependence maximizes epidemic severity, yielding sharp distribution-free upper bounds. A key implication is that epidemic outcomes cannot be ordered by susceptibility heterogeneity alone: while increasing susceptibility variance reduces the final size under independence, positive dependence between susceptibility and infectivity can locally increase epidemic size for any basic reproduction number exceeding one. We further show that neither susceptibility variance nor the sign of the covariance suffices to determine epidemic severity under dependence. These findings offer new insights into epidemic risk assessment under limited information.
\end{abstract}
\par\noindent\textbf{Keywords:}
Heterogeneous SIR models; dependence structure; copulas; concordance order; epidemic final size; major outbreak probability
\section{Introduction}
Host heterogeneity is now understood to be a fundamental driver of infectious disease dynamics. Variability in individual susceptibility to infection and infectivity upon infection arises from biological factors (e.g. pathogen load), behavioral heterogeneity (e.g. contact structure), and social structure can profoundly alter epidemic growth and final epidemic size \citep{lloyd2005superspreading,britton2020mathematical}. During but also prior to the Covid-19 pandemic, these insights have motivated a large literature extending the classical homogeneous SIR framework to incorporate individual-level variation, most commonly through random susceptibility, random infectivity, or both \citep{tkachenko2021time,gomes2022individual}.

A robust conclusion emerging from early work on heterogeneous susceptibility is that, under independence or one-sided heterogeneity, increasing variance in susceptibility tends to reduce the epidemic attack rate relative to a homogeneous population with the same mean susceptibility. This phenomenon, often attributed to selective depletion of highly susceptible individuals, has been demonstrated by many scholars including, e.g. \cite{ball1993final,dwyer1997host,novozhilov2008spread}, and \cite{gomes2022individual}, among others. 

However, much of this literature relies—often implicitly—on restrictive assumptions about the relationship between susceptibility and infectivity. In particular, susceptibility and infectivity are frequently assumed to be independent, perfectly correlated, or linked by a fixed parametric rule. In realistic populations, such assumptions are difficult to justify. Individuals who are more likely to become infected (through higher exposure, biological predisposition, or behavior) are often also more likely to transmit once infected, inducing positive dependence between susceptibility and infectivity. Conversely, behavioral responses, partial immunity, or targeted interventions may generate negative dependence. Understanding epidemic outcomes therefore requires separating the effects of marginal heterogeneity from those of the dependence structure between traits.

Recent work has begun to explore this distinction.
\cite{kawagoe2021epidemic} introduced a well-mixed SIR model with distributed susceptibility and infectivity and showed that correlations between these traits can substantially modify early growth rates and final epidemic sizes. Their analysis establishes final-size comparisons via Laplace-type sufficient conditions tailored to the epidemic functional, rather than through an explicit ordering of joint trait distributions.
Here, we instead employ the concordance order, which provides a clear distribution-level ordering of susceptibility–infectivity dependence and yields distribution-free monotonicity results for both the final epidemic size and the major outbreak probability.
Numerical evidence provided by \cite{tuschhoff2025heterogeneity} further demonstrated that positive dependence between susceptibility and infectivity can lead to larger, faster, and more likely epidemics, including major outbreaks in regimes where commonly used reproduction numbers suggest subcriticality. The present work provides an analytical framework that explains and rigorously establishes these observations, thereby offering a complementary theoretical understanding of the effects of joint trait dependence.

In this paper, we study a well-mixed SIR model with heterogeneous susceptibility and infectivity, allowing for an arbitrary joint distribution of these traits with fixed marginals. Our focus is not on introducing a new epidemic model or re-establishing known qualitative phenomena, but on understanding how epidemic outcomes depend on the dependence structure between susceptibility and infectivity. Using copula theory and the concordance order, we show that both the final epidemic size and the probability of a major outbreak are monotone with respect to this dependence. In particular, among all joint distributions with given marginals, the comonotonic coupling maximizes epidemic outcomes, yielding sharp, distribution-free upper bounds on epidemic risk. This framework provides a unifying analytical explanation for observations previously reported in model-based and numerical studies, including those of \cite{kawagoe2021epidemic} and \cite{tuschhoff2025heterogeneity}, and clarifies which conclusions are robust to distributional assumptions. Copulas play a central role in this analysis, as they allow dependence between susceptibility and infectivity to be characterized independently of their marginal distributions. This is essential because these traits are latent and often defined only up to strictly increasing transformations; unlike correlation, copulas are invariant under such transformations and capture the full dependence structure relevant for distribution-free monotonicity and extremal results.

A central implication of our results is that epidemic outcomes cannot be ordered by susceptibility heterogeneity alone. While increasing susceptibility variance reduces the final size under independence, positive dependence between susceptibility and infectivity can locally increase epidemic size for any basic reproduction number exceeding one. Moreover, neither susceptibility variance nor the sign of the covariance suffices to determine epidemic severity when dependence is present. These findings clarify apparent discrepancies in the literature and explain why conclusions drawn under independence may fail in more realistic settings.


\section{Model and analytical results}
\label{sec:model}


\subsection{Model formulation and the basic reproduction number}
The standard (homogeneous) SIR model relies on the strong assumption that all individuals have identical susceptibility to infection and identical infectiousness once infected. The model is given by 
\begin{align}
S'(t) &=- \beta S(t) I(t),\\
I'(t) &= \beta S(t) I(t) - \gamma I(t),\\
R'(t) &= \gamma I(t),
\end{align}
where $S(t), I(t)$ and $R(t)$ are the fractions of susceptible, infectious, and recovered individuals at time $t$ respectively, $\beta$ is the effective contact rate at which infectious individuals infect susceptible individuals and $\gamma^{-1}$ is the mean infectious period. More general models allowing for individual-level heterogeneity in susceptibility and infectivity have been proposed in the literature; see, for example, \cite{kawagoe2021epidemic}. For convenience, we reconstruct such a model below. Let $(X,C)$ be a pair of nonnegative random variables defined on $\mathbb R_+^2$ with joint distribution $\Pi$ and joint density $\pi(x,c)$, such that $X$ represents the susceptibility factor and $C$ the infectivity factor, with the marginals $\Pi_X$ and $\Pi_C$ respectively. Assume that $\mathbb E[X]=\mathbb E[C]=1$ (otherwise consider rescaled marginals), any scaling can be absorbed into the transmission parameter $\beta$. We also assume that $\mathbb{E}[CX]<\infty$ and $\mathbb{P}(X>0)=\mathbb{P}(C>0)=1.$ In the following, and for $t\ge0$ and $(x,c)\in\mathbb R_+^2$, we
denote by $s(t,x,c),\,i(t,x,c),\,r(t,x,c)$ the densities of susceptible, infectious and recovered individuals with trait $(x,c)$ satisfying
\[
s(t,x,c)+i(t,x,c)+r(t,x,c) = \pi(x,c).
\]
Then the trait-structured SIR model is
\begin{equation}\label{eq:PDEs}
\begin{aligned}
\partial_t s(t,x,c) &= - x\,\Lambda(t)\,s(t,x,c),\\[4pt]
\partial_t i(t,x,c) &= x\,\Lambda(t)\,s(t,x,c) - \gamma\, i(t,x,c),\\[4pt]
\partial_t r(t,x,c) &= \gamma\, i(t,x,c),
\end{aligned}
\end{equation}
with the force of infection
\begin{equation}\label{eq:Lambda}
\Lambda(t) \;=\; \beta \iint_{\mathbb R_+^2} c\, i(t,x,c)\,dx\,dc.
\end{equation}
The total population fractions are given by
\[
S(t)=\iint s(t,x,c)\,dx\,dc,\quad I(t)=\iint i(t,x,c)\,dx\,dc,\quad R(t)=\iint r(t,x,c)\,dx\,dc.
\]

The particular cases where $C=\alpha$ and $C=\alpha\,X$, discussed in (Tekeli 2025 arxiv), are covered by setting $\pi(x,c)=\pi_X(x)\delta(c-\alpha)$ and $\pi(x,c)=\pi_X(x)\delta(c-\alpha\,x)$, respectively, where $\delta$ is the Dirac delta function.

The basic reproduction number, being the average number of new infections produced by a 'typical' index case in a whole susceptible population, depends on what we do mean by a 'typical' infected in the early phase of the epidemic. When the initial infected person is taken uniformly, this number becomes
\begin{align}
R_0=\frac{\beta}{\gamma}\, \mathbb E(C)\, \mathbb E(X)=\frac{\beta}{\gamma},
\end{align}
under the hypothesis $\mathbb E(C)=\mathbb E(X)=1$. However, an individual chosen uniformly at random is not representative of early infectives in a real epidemic. Individuals with higher susceptibility are more likely to be infected first and hence are over-represented among early cases. Following the discussion in Kawagoe et al. (2021), and noticing that the trait distribution among early infectives is proportional to their susceptbility, leading to the following definition of the reproduction number
\begin{equation}\label{eq:R0}
\overline{R}_0 \;=\; \frac{\beta}{\gamma}\,\mathbb E(C X).
\end{equation}
This expression arises naturally as the dominant eigenvalue of the next-generation operator and governs the early exponential growth of infections.

In particular, $\overline{R}_0=R_0$ is the basic reproduction number corresponding to different situations, namely, when there is no heterogeneity, when only one heterogeneity source is considered, and when $X$ and $C$ are independent. When susceptibility and infectivity are dependent, the basic reproduction number  $\overline R_0$ reflects not only average infectiousness but also the composition of early infectives and incorporates the size-biased distribution induced by susceptibility (see Kawagoe et al (2021) for more details). 

In the following we proceed to give the main results related to the final size and major outbreak probability. The proofs are given in the appendix.
\subsection{Final size equation}
We proceed to derive the final size equation under dependent traits. Integrating the first equation in \eqref{eq:PDEs} in time gives
\begin{align}
s(t,x,c) = s(0,x,c)\exp\!\Big(-x\int_0^t \Lambda(u)\,du\Big).
\end{align}
For an outbreak with negligible initial immunity, it is natural to take
$s(0,x,c)\approx\pi(x,c)$, hence
\begin{align}\label{remain_susc}
s(t,x,c) = \pi(x,c)\,e^{-xL(t)},
\end{align}
where $L$ is the cumulative incidence given by \(L(t)=\int_0^t\Lambda(u)\,du\). Thus, at the end of the epidemic the final cumulative incidence is \(L_\infty \;=\; \lim_{t\to\infty} L(t) \).
Integrating the infectious density over time and using \eqref{eq:Lambda} yields the exact scalar fixed-point equation for final incidence $L_\infty$ and thus the final infected fraction (the attack rate) $R_\infty$ as 
\begin{align}
L_\infty &\;=\; \dfrac{\beta}{\gamma}\;
\mathbb E\!\Big[\, C\big(1-e^{-X L_\infty}\big)\Big],\label{eq:final_incid}\\
R_\infty &\;=\; 1 - \mathbb E\!\big[ e^{-X L_\infty} \big].\label{eq:final_attack}
\end{align}
The following result gives the condition for the existence and uniqueness of solution to system \eqref{eq:final_incid}-\eqref{eq:final_attack}.

\begin{theorem}[Existence and uniqueness]\label{thm:existence} The following statements hold for the model \eqref{eq:PDEs}
\begin{enumerate}
\item If $\overline{R}_0\le1$ then the unique solution of \eqref{eq:final_incid} is $L_\infty=0$ and thus $R_\infty=0$.
\item If $\overline{R}_0>1$ then there exists a unique solution $L_\infty>0$ to \eqref{eq:final_incid} and thus $R_\infty>0$.
\end{enumerate}
\end{theorem}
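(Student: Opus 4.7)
The plan is to recast the final-size equation as a one-dimensional fixed-point problem and exploit the concavity of the map in $L$. Define
\[
\varphi(L) \;=\; \frac{\beta}{\gamma}\,\mathbb{E}\!\left[C\bigl(1-e^{-X L}\bigr)\right], \qquad L \ge 0,
\]
so that \eqref{eq:final_incid} reads $L_\infty = \varphi(L_\infty)$, and set $h(L) = \varphi(L) - L$. The nonnegative roots of $h$ are exactly the solutions to \eqref{eq:final_incid}. Since $\varphi(0)=0$, $L_\infty = 0$ is always a solution; the question is whether there is an additional positive root.

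Next I would establish the structural properties of $\varphi$. Using $c(1-e^{-xL}) \le c x L$ together with $\mathbb{E}[CX]<\infty$, dominated convergence permits differentiation under the integral and gives
\[
\varphi'(L) \;=\; \frac{\beta}{\gamma}\,\mathbb{E}\!\left[CX\, e^{-X L}\right],
\]
so $\varphi$ is $C^1$ on $[0,\infty)$, strictly increasing, and satisfies $\varphi'(0) = (\beta/\gamma)\mathbb{E}[CX] = \overline{R}_0$. For every $x>0$ the map $L \mapsto 1-e^{-xL}$ is strictly concave, and since $\mathbb{P}(X>0)=\mathbb{P}(C>0)=1$, taking expectation preserves strict concavity, so $\varphi$, and hence $h$, is strictly concave on $[0,\infty)$. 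Finally $\varphi(L) \le (\beta/\gamma)\mathbb{E}[C] = \beta/\gamma$, which forces $h(L) \to -\infty$ as $L \to \infty$.

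The case analysis is then essentially immediate. If $\overline{R}_0 \le 1$, then $h'(0) = \overline{R}_0-1 \le 0$; combining this with strict concavity and $h(0)=0$ gives $h(L) < 0$ for every $L>0$, so $L_\infty=0$ is the unique solution. If $\overline{R}_0 > 1$, then $h'(0) > 0$ forces $h$ to be strictly positive on a right neighborhood of $0$, while $h(L) \to -\infty$ ensures a zero in $(0,\infty)$ by the intermediate value theorem. Uniqueness of this positive root follows from strict concavity: a strictly concave function on $[0,\infty)$ has at most two zeros, and since $L=0$ accounts for one of them, exactly one positive root remains. Positivity of $R_\infty$ in the supercritical case follows directly from \eqref{eq:final_attack} and $\mathbb{P}(X>0)=1$, and $R_\infty=0$ in the subcritical case from $L_\infty=0$.

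The main technical obstacle is the strict concavity statement at the endpoint $L=0$. A naive argument would differentiate twice and write $\varphi''(L) = -(\beta/\gamma)\,\mathbb{E}[CX^2 e^{-XL}]$, but this requires finiteness of $\mathbb{E}[CX^2]$, which is not assumed. The way around this is to avoid $\varphi''$ entirely and instead use the pointwise strict concavity of $L \mapsto 1-e^{-xL}$ for each $x>0$, together with $\mathbb{P}(X>0)=1$, to transfer strict concavity to $\varphi$ via a Jensen-type inequality applied under the expectation. The rest of the argument is standard convex-analytic bookkeeping.
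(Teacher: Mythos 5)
Your proof is correct and follows essentially the same route as the paper's: both reduce \eqref{eq:final_incid} to a fixed-point problem for the concave, strictly increasing map $L\mapsto (\beta/\gamma)\,\mathbb{E}\!\left[C\bigl(1-e^{-XL}\bigr)\right]$, whose slope at the origin is $\overline{R}_0$. You are in fact somewhat more careful than the paper on two points --- establishing \emph{strict} concavity without assuming $\mathbb{E}[CX^2]<\infty$, and supplying the boundedness argument $\varphi(L)\le\beta/\gamma$ that guarantees a positive root exists in the supercritical case.
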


Tuschhoff \& Kennedy (2025) claimed that larger outbreaks can occur for subcritical values of $R_0$. This is not surprising provided that $R_0\le \overline{R}_0$, under positive correlated traits.

\subsection{Concordance order for the final epidemic size}

Generally, the joint traits distribution is rarely unknown. We suggest to use a copula based-approach to establish the monotonicity of the final size in concordance. We first recall the definition of a copula in two-dimensional direction. A copula \(K(u,v)\) is a joint distribution on \([0,1]^2\) with uniform marginals \citep{nelsen2006introduction}. Assuming continuous marginals for $X$ and $C$, Sklar's theorem \citep{nelsen2006introduction} guarantees the existence of a unique copula $K$ verifying
\begin{align}
\Pi(x,c) = K\left( \Pi_X(x),\Pi_C(c) \right),\qquad\forall x,c\in\mathbb{R}.
\end{align}
This implies that the expectation of any transformation $\phi$ with respect to $\Pi$, when it exists, could be expressed as
\begin{align}
\mathbb{E}_{\Pi} \left( \phi(X,Y) \right) = \int\,\phi(x,y)\,dC\left( \Pi_X(x),\Pi_C(c) \right)=\mathbb{E}_K \left( \phi(X,Y) \right).
\end{align}
The association between $X$ and $C$ is commonly expressed in terms of concordance order, capturing more relationships than the correlation coefficient, relying on linear association between two random variables. 
Let $(X_1,C_1)$ and $(X_2,C_2)$ be two bivariate random vectors with joint distributions $\Pi_1$ and $\Pi_2$, respectively, and identical marginals.
We say that $(X_2,C_2)$ is more concordant than $(X_1,C_1)$, and write $(X_1,C_1)\preceq_{c}(X_2,C_2)$ 
if \(\Pi_1(x,c)\le \Pi_2(x,c)\), for all \((x,c)\in\mathbb{R}_+^2\) \citep{joe1990multivariate}.
If $K_1$ and $K_2$ are the corresponding copulas, respectively, this is equivalent to say \citep{cebrian2004testing}, 
\begin{align}
K_1(u,v)\le K_2(u,v)\quad\text{for all }u,v\in[0,1].
\end{align}
Through integration by parts, one can obtain that
\begin{align}\label{order_expect}
\mathbb{E}_{K_1} \left( \phi(X,Y) \right) \le \mathbb{E}_{K_2} \left( \phi(X,Y) \right),
\end{align}
for any integrable map $\phi:\mathbb{R}^2\mapsto\mathbb{R}$ with positive mixed partial derivatives, and the order in \eqref{order_expect} reverses when \(\frac{\partial^2\phi(\cdot,\cdot)}{\partial x \partial y}<0\). Although the associated copula is not unique when the marginal distributions are not continuous, the concordance ordering in \eqref{order_expect} is completely unaffected by copula non-uniqueness.

Now we are able to establish the final size monotonicity result in concordance ordering of traits.
\begin{prop}\label{prop:concord_monoton}
Let $(X_1,C_1)$ and $(X_2,C_2)$ be two trait pairs with identical marginals and such that \((X_1,C_1)\preceq_{c} (X_2,C_2)\). Then, the corresponding basic reproduction numbers satisfy $\overline{R}_0^{(1)}\le \overline{R}_0^{(2)}$. Then the corresponding final attack rates $R_\infty^{(1)}$ and $R_\infty^{(2)}$, respectively, satisfy
\begin{align}
R_\infty^{(1)} \le R_\infty^{(2)}.
\end{align}
\end{prop}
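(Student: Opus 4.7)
The plan is to apply the concordance-order characterization \eqref{order_expect} twice: once with the bilinear test function $\phi(x,c)=xc$ to obtain the inequality on $\overline R_0$, and once with the parametric family $\phi_L(x,c)=c(1-e^{-xL})$ to push the ordering through the final-size fixed-point equation \eqref{eq:final_incid}. For $\phi(x,c)=xc$ one has $\partial^2\phi/\partial x\,\partial c=1>0$, so \eqref{order_expect} gives $\mathbb E_{\Pi_1}[XC]\le\mathbb E_{\Pi_2}[XC]$, i.e.\ $\overline R_0^{(1)}\le \overline R_0^{(2)}$, which is the first claim.

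For the comparison of the attack rates, I would introduce the maps
\[
F_i(L)\;=\;\frac{\beta}{\gamma}\,\mathbb E_{\Pi_i}\!\bigl[C(1-e^{-XL})\bigr],\qquad i=1,2,
\]
so that by \eqref{eq:final_incid} the final incidences $L_\infty^{(i)}$ are the nonnegative fixed points of $F_i$. A direct computation gives $\partial^2\phi_L/\partial x\,\partial c=Le^{-xL}>0$ for every $L>0$, so \eqref{order_expect} yields the pointwise domination $F_1(L)\le F_2(L)$ for all $L\ge 0$.

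The next step translates this pointwise domination into a fixed-point comparison. If $\overline R_0^{(2)}\le 1$, Theorem~\ref{thm:existence} gives $L_\infty^{(1)}=L_\infty^{(2)}=0$; if $\overline R_0^{(1)}\le 1<\overline R_0^{(2)}$, then $L_\infty^{(1)}=0\le L_\infty^{(2)}$ trivially. In the remaining regime $1<\overline R_0^{(1)}\le \overline R_0^{(2)}$, each $F_i$ is concave on $[0,\infty)$ (being a positive-weighted integral of the concave maps $L\mapsto c(1-e^{-xL})$), with $F_i(0)=0$ and $F_i'(0)=\overline R_0^{(i)}>1$. Combined with the uniqueness of the positive fixed point from Theorem~\ref{thm:existence}, this forces the sign pattern $F_2(L)>L$ on $(0,L_\infty^{(2)})$ and $F_2(L)<L$ on $(L_\infty^{(2)},\infty)$. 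Evaluating the pointwise domination at $L=L_\infty^{(1)}$ gives $F_2(L_\infty^{(1)})\ge F_1(L_\infty^{(1)})=L_\infty^{(1)}$, whence $L_\infty^{(1)}\le L_\infty^{(2)}$.

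Finally, since $\Pi_1$ and $\Pi_2$ share the susceptibility marginal $\Pi_X$, equation \eqref{eq:final_attack} reads $R_\infty^{(i)}=1-\int_{\mathbb R_+} e^{-xL_\infty^{(i)}}\,d\Pi_X(x)$, and the monotonicity of $L\mapsto 1-e^{-xL}$ combined with $L_\infty^{(1)}\le L_\infty^{(2)}$ yields $R_\infty^{(1)}\le R_\infty^{(2)}$. The main obstacle is the fixed-point comparison step: pointwise domination $F_1\le F_2$ does not by itself imply that the fixed points are ordered, and it is concavity of $F_i$ together with the supercritical slope $F_i'(0)>1$ that reduces the comparison to the single sign evaluation of $F_2(L_\infty^{(1)})-L_\infty^{(1)}$.
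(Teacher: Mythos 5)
Your proof is correct and follows essentially the same route as the paper: apply the concordance inequality \eqref{order_expect} to $\phi(x,c)=xc$ for the $\overline R_0$ ordering and to $\varphi_L(x,c)=c(1-e^{-xL})$ to get $F_1\le F_2$ pointwise, then pass to the fixed points and use the shared susceptibility marginal for the attack rates. In fact your treatment of the fixed-point comparison step (using concavity, $F_i'(0)>1$, and the sign evaluation of $F_2(L_\infty^{(1)})-L_\infty^{(1)}$) is more careful than the paper's, which asserts that $F_1\le F_2$ implies $L_\infty^{(1)}\le L_\infty^{(2)}$ without justification.
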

Proposition \ref{prop:concord_monoton} provides an analytical explanation for numerical observations reported by Tushhoff \& Kennedy (2025) stating that correlation of the two heterogeneities tend to increase the final size. Figure \ref{fig:final_size_conc} illustrates Proposition \ref{prop:concord_monoton} under Gaussian copula traits dependence with Gamma marginals. For sufficiently negative dependence, the epidemic remains subcritical and no outbreak occurs as $\overline{R}_0$ remains below 1. Beyond a critical value of the concordance parameter $\rho$, the final size increases in concordance and remains below the final epidemic size corresponding to comonotonic dependent traits.
\begin{figure}[h!]
    \centering
    \includegraphics[width=.7\textwidth]{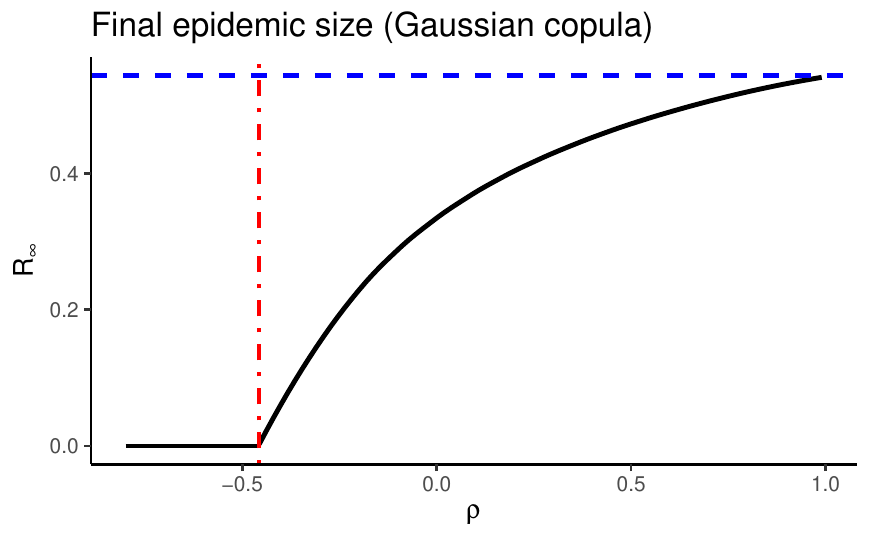}
    \caption{Final epidemic size under a Gaussian copula dependence structure with Gamma marginal distributions, each having mean one and coefficient of variation equal to one, with $R_0=1.5$. The red dashed vertical line indicates the correlation value at which the basic reproduction number $\overline{R}_0$ becomes critical. The blue dashed horizontal line shows the distribution-free upper bound on the final epidemic size, attained under comonotonic dependence.}
    \label{fig:final_size_conc}
\end{figure}

Since any copula is bounded up by the comonotonic copula, i.e. $K(u,v)\le M(u,v) = \min\{u,v\},$ for any $u,v\in[0,1]$. The Proposition \ref{prop:concord_monoton} allows to derive the upper bound of the final size, over all traits distributions, under maximal positive dependence as in the following result.
\begin{cor}\label{prop:final_size_comonotonic}
Assume that $\overline{R}_0>1$. Then, the attack rate satisfies $R_\infty\le R_{M}$, where $R_{M}$ is the solution to
\begin{align}
R_{M} &\;=\; 1 - \mathbb{E}\!\left[e^{-X L_{M}}\right],\label{R_com}\\
L_{M} &\;=\; \dfrac{\beta}{\gamma}\;
\mathbb E_{M}\!\Big[\, C\big(1-e^{-X L_{M}}\big)\Big],\label{L_com}
\end{align}
where $E_{M}$ is the expectation under comonotonic copula $M(u,v) = \min\{u,v\},$ for any $u,v\in[0,1]$.
\end{cor}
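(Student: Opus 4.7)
The plan is to reduce the corollary directly to Proposition \ref{prop:concord_monoton} via the Fréchet--Hoeffding upper bound for copulas. First I would note that for any copula $K$ associated with the pair $(X,C)$, the classical bound $K(u,v) \le M(u,v) = \min\{u,v\}$ holds for every $(u,v) \in [0,1]^2$. Using Sklar's representation $\Pi(x,c) = K(\Pi_X(x),\Pi_C(c))$ together with the analogous representation $\Pi_M(x,c) = M(\Pi_X(x),\Pi_C(c))$ for the comonotonic joint distribution, this bound translates pointwise into $\Pi(x,c) \le \Pi_M(x,c)$ on $\mathbb{R}_+^2$. In the concordance notation introduced earlier, this is exactly $(X,C) \preceq_c (X^*,C^*)$, where $(X^*,C^*)$ denotes a trait pair with the same marginals $\Pi_X,\Pi_C$ as $(X,C)$ but with comonotonic joint distribution.

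Second, I would apply Proposition \ref{prop:concord_monoton} to the two pairs $(X,C) \preceq_c (X^*,C^*)$. Since they share their marginals by construction, the proposition directly yields both the inequality $\overline{R}_0 \le (\beta/\gamma)\,\mathbb{E}_M[CX]$ between basic reproduction numbers and the inequality $R_\infty \le R_M$ on final attack rates, where $R_M$ denotes the attack rate associated with the comonotonic coupling. Because we have assumed $\overline{R}_0 > 1$, this chain of inequalities gives $(\beta/\gamma)\,\mathbb{E}_M[CX] > 1$, so Theorem \ref{thm:existence} guarantees a unique positive $L_M$ solving \eqref{L_com}, and then \eqref{R_com} defines $R_M$ unambiguously as the corresponding attack rate. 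The explicit form $\mathbb{E}_M[\,\cdot\,]$ in \eqref{L_com} records that expectations are taken with respect to the comonotonic joint law $\Pi_M$.

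I expect no serious obstacle here, since the substantive content is entirely absorbed in Proposition \ref{prop:concord_monoton}. The only point worth flagging rather than leaving implicit is that the Fréchet--Hoeffding upper bound $K \le M$ holds for every copula, including the degenerate cases arising when $\Pi_X$ or $\Pi_C$ has atoms (where copulas are not unique); as noted after \eqref{order_expect}, the concordance ordering itself is unaffected by this non-uniqueness, so the resulting bound $R_\infty \le R_M$ is genuinely distribution-free over all joint laws with the prescribed marginals.
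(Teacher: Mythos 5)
Your proposal is correct and follows essentially the same route as the paper, which derives this corollary by combining the Fr\'echet--Hoeffding upper bound $K(u,v)\le M(u,v)=\min\{u,v\}$ with Proposition~\ref{prop:concord_monoton} (the paper states this in the text immediately preceding the corollary rather than in a separate appendix proof). Your additional remarks on the existence of $L_M$ via Theorem~\ref{thm:existence} and on copula non-uniqueness for atomic marginals are consistent with the paper's own caveats.
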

We mention that the upper bound $R_{M}$ in Corollary \ref{prop:final_size_comonotonic} is not attainable unless the trait marginals are continuous. Such bound is attainable when $\rho=1$ under Gamma marginals as seen in Figure \ref{fig:final_size_conc}.

\subsection{Non-monotonicity in susceptibility variance}

Under susceptibility-only heterogeneity or independent traits, the final epidemic size is monotone decreasing  in susceptibility variance.  In the following we show that this monotonicity is lost whenever dependence on infectivity is considered. Particularly, we prove the feature, numerically seen by Tuschhoff and Kennedy (2025), that the final size may increase for small variance of susceptibility heterogeneity. In contrast to Tuschhoff and Kennedy (2025), we also prove that this feature does not only hold because of $R_{0}>1$ (nor $\overline{R}_0$) being small or moderate but for any arbitrary $R_0>1$.
\begin{prop}
\label{prop:variance_effect}
Let $(X_\varepsilon,C)$ be a family of trait pairs with
\(
X_\varepsilon = 1 + \varepsilon Z,
\) such that $\mathbb{E}[Z]=0$ and \(\mathbb{P}(X_\varepsilon>0)=1\) for sufficiently small nonnegative $\varepsilon$. Let $R_\infty(\varepsilon)$ be the corresponding final size. Then, the slope $R_\infty'(0)$ is proportional to $\mathbb{E}(CZ)={\rm Cov}(C,X)$.Particularly, the sign of $R_\infty'(0)$ is the same as the sign of ${\rm Cov}(C,X)$.
\end{prop}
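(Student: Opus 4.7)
The plan is to apply the implicit function theorem to the scalar fixed-point equation \eqref{eq:final_incid} viewed as $F(L,\varepsilon)=0$, where
\[
F(L,\varepsilon) \;=\; L \;-\; \frac{\beta}{\gamma}\,\mathbb{E}\!\left[C\bigl(1-e^{-(1+\varepsilon Z)L}\bigr)\right].
\]
At $\varepsilon=0$, the equation reduces to the classical homogeneous SIR final-size relation $L_{0}=\tfrac{\beta}{\gamma}(1-e^{-L_{0}})$ (using $\mathbb E[C]=1$), which by Theorem \ref{thm:existence} and $\overline{R}_0=R_0>1$ has a unique positive solution $L_{0}>0$. First I would verify, by a standard dominated-convergence argument using $|Ze^{-(1+\varepsilon Z)L}|\le |Z|$ on a neighborhood of $\varepsilon=0$ (where $X_\varepsilon$ stays bounded away from $0$) together with $\mathbb E[|CZ|]<\infty$, that $F$ is $C^{1}$ in a neighborhood of $(L_{0},0)$, with
\[
\partial_{L}F(L_{0},0) \;=\; 1-\frac{\beta}{\gamma}\,e^{-L_{0}},
\qquad
\partial_{\varepsilon}F(L_{0},0) \;=\; -\frac{\beta}{\gamma}\,L_{0}\,e^{-L_{0}}\,\mathbb{E}[CZ].
\]

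The crucial non-degeneracy step is showing $\partial_{L}F(L_{0},0)>0$, i.e.\ $R_0 e^{-L_{0}}<1$. This follows from strict concavity of $L\mapsto R_0(1-e^{-L})$ combined with $R_0>1$: the function has slope $R_0>1$ at $0$ and equals $L$ at $L_{0}$, so its slope at $L_{0}$ must be strictly less than $1$ (this is the usual stability argument for the SIR final-size map). Granting this, the implicit function theorem yields
\[
L_{\infty}'(0) \;=\; -\,\frac{\partial_{\varepsilon}F(L_{0},0)}{\partial_{L}F(L_{0},0)}
\;=\; \frac{R_0\,L_{0}\,e^{-L_{0}}}{\,1-R_0 e^{-L_{0}}\,}\;\mathbb{E}[CZ],
\]
with a strictly positive prefactor.

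Next I would differentiate the attack-rate identity \eqref{eq:final_attack}. Writing $R_{\infty}(\varepsilon)=1-\mathbb E\!\left[e^{-(1+\varepsilon Z)L_{\infty}(\varepsilon)}\right]$ and differentiating (again justified by a dominated-convergence argument), the chain rule gives
\[
R_{\infty}'(0) \;=\; \mathbb{E}\!\left[\bigl(Z\,L_{0}+L_{\infty}'(0)\bigr)\,e^{-L_{0}}\right]
\;=\; e^{-L_{0}}\bigl(L_{0}\,\mathbb{E}[Z] + L_{\infty}'(0)\bigr) \;=\; e^{-L_{0}}\,L_{\infty}'(0),
\]
where the last equality uses $\mathbb E[Z]=0$. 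Substituting the expression for $L_{\infty}'(0)$,
\[
R_{\infty}'(0) \;=\; \frac{R_0\,L_{0}\,e^{-2L_{0}}}{\,1-R_0 e^{-L_{0}}\,}\;\mathbb{E}[CZ],
\]
which is a strictly positive multiple of $\mathbb{E}[CZ]=\mathrm{Cov}(C,X_\varepsilon)/\varepsilon$ (the covariance having the same sign as $\mathbb E[CZ]$ since $X_\varepsilon=1+\varepsilon Z$). This establishes both the proportionality and the sign claim.

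The main obstacle is the quantitative bound $R_0 e^{-L_{0}}<1$, since without it the denominator could vanish and the linearization would fail. The argument above handles it cleanly via concavity, but one must be careful to record that $R_0>1$ is used precisely here (not merely for the existence of $L_{0}$); all other steps are routine applications of the implicit function theorem and differentiation under the expectation, whose integrability hypotheses are covered by the paper's standing assumptions $\mathbb E[CX]<\infty$ and $\mathbb P(X>0)=1$, restricting to $|\varepsilon|$ small enough that $1+\varepsilon Z$ remains almost surely positive.
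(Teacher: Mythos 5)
Your proposal is correct and follows essentially the same route as the paper's own proof: the implicit function theorem applied to $F(L,\varepsilon)=L-\tfrac{\beta}{\gamma}\mathbb{E}[C(1-e^{-LX_\varepsilon})]$, the same two partial derivatives, the same positivity of $\partial_L F(L_0,0)$ (which the paper verifies by the algebraic identity $L_0=\tfrac{\beta}{\gamma}(1-e^{-L_0})$ and you verify by the equivalent concavity/stability argument), and the same chain-rule step giving $R_\infty'(0)=e^{-L_0}L_\infty'(0)$. Your additional care about dominated convergence and about the distinction between $\mathbb{E}[CZ]$ and $\mathrm{Cov}(C,X_\varepsilon)=\varepsilon\,\mathbb{E}[CZ]$ is a minor tightening of the paper's argument, not a different approach.
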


Proposition~\ref{prop:variance_effect} shows that the \emph{local} effect of introducing
susceptibility heterogeneity depends on the sign of ${\rm Cov}(C,X)$: when susceptibility and infectivity are positively associated, a small increase in susceptibility variance increases the final epidemic size. Figure \ref{fig:final_size_small} clearly shows local increasing of the final size in susceptibility heterogeneity with an increasing slope in the coefficient of variation of infectivity distribution under comonotonic dependence.
\begin{figure}[h!]
    \centering
    \includegraphics[width=1\textwidth]{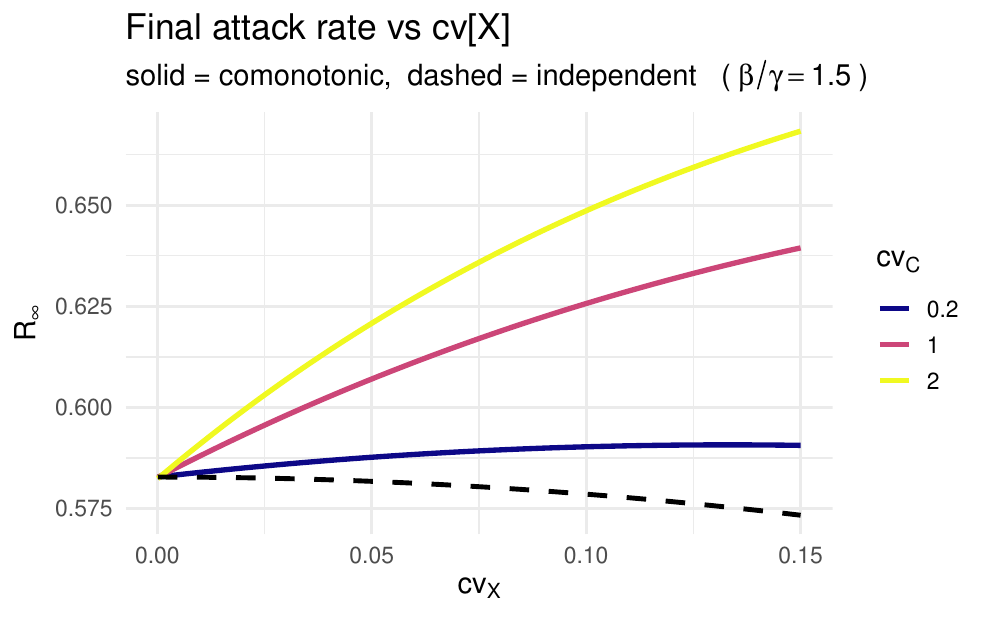}
    \caption{Final epidemic size under comonotonic dependence structure (solid lines) and for independent traits (dashed line). Gamma marginals are used here.}
    \label{fig:final_size_small}
\end{figure}
While it was shown that susceptibility heterogeneity may reduce the final epidemic size under independent traits or when only susceptibility heterogeneity is considered, the following proposition shows that it is not the only reason for epidemic decay/suppression but rather how susceptibility is shaped.

\begin{prop}\label{prop:large_variance}
Let $(X_n,C_n)_{n\ge 1}$ be a sequence of positive susceptibility--infectivity pairs, allowing for arbitrary dependence, such that \(\mathbb{E}[X_n]=\mathbb{E}[C_n]=1\), for all $n\ge 1$, and \(\sup_{n\ge1}\mathbb{E}(X_n C_n)<\infty\). Define for any \(\varepsilon>0\), \( p(\varepsilon) :=\liminf\limits_{n\to \infty} \mathbb{P}(X_n\le\varepsilon)\). If \(\lim\limits_{\varepsilon\to 0}p(\varepsilon) = p\) for some $p\in[0,1]$, then the corresponding final epidemic size $R_\infty^{(n)}$ satisfies
\begin{align}
\limsup_{n\to\infty} R_\infty^{(n)} \;\le\; 1-p .
\end{align}
\end{prop}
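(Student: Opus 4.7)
\medskip
\noindent\textbf{Proof plan for Proposition \ref{prop:large_variance}.} The plan is to bound $R_\infty^{(n)}$ from above by concentrating on the mass of $X_n$ near zero, where the exponential factor $e^{-X_n L_\infty^{(n)}}$ is close to one regardless of how large the cumulative incidence becomes. The key input will be a cheap but uniform bound on $L_\infty^{(n)}$ that can be read off directly from the final-size fixed point; everything else is a splitting argument followed by two limit operations.

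\medskip
\noindent\textbf{Step 1: uniform a priori bound on the cumulative incidence.}
For each $n$ such that $\overline{R}_0^{(n)}>1$, Theorem \ref{thm:existence} provides a unique $L_\infty^{(n)}>0$ satisfying \eqref{eq:final_incid}. Since $1-e^{-X_n L_\infty^{(n)}}\le 1$ almost surely and $\mathbb{E}[C_n]=1$, I would immediately deduce from \eqref{eq:final_incid} the crude bound
\begin{equation*}
L_\infty^{(n)} \;\le\; \tfrac{\beta}{\gamma}\,\mathbb{E}[C_n] \;=\; \tfrac{\beta}{\gamma}.
\end{equation*}
For $n$ with $\overline{R}_0^{(n)}\le 1$ we simply have $R_\infty^{(n)}=0$, which satisfies the claim trivially, so only the supercritical $n$'s need attention. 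The hypothesis $\sup_n\mathbb{E}[X_nC_n]<\infty$ is used only to guarantee that each $\overline{R}_0^{(n)}$ is a finite, well-defined quantity.

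\medskip
\noindent\textbf{Step 2: localize on $\{X_n\le\varepsilon\}$.}
For an arbitrary $\varepsilon>0$, I would lower-bound the Laplace average by restricting to the event where susceptibility is small:
\begin{equation*}
\mathbb{E}\!\left[e^{-X_nL_\infty^{(n)}}\right] \;\ge\; \mathbb{E}\!\left[e^{-X_nL_\infty^{(n)}}\mathbf{1}_{\{X_n\le\varepsilon\}}\right] \;\ge\; e^{-\varepsilon L_\infty^{(n)}}\,\mathbb{P}(X_n\le\varepsilon) \;\ge\; e^{-\varepsilon\beta/\gamma}\,\mathbb{P}(X_n\le\varepsilon),
\end{equation*}
where the last inequality invokes the uniform bound from Step 1. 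Substituting into \eqref{eq:final_attack} yields the purely distributional bound
\begin{equation*}
R_\infty^{(n)} \;\le\; 1 - e^{-\varepsilon\beta/\gamma}\,\mathbb{P}(X_n\le\varepsilon),
\end{equation*}
valid for every $n$ and every $\varepsilon>0$.

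\medskip
\noindent\textbf{Step 3: pass to the iterated limit.}
Taking $\limsup_{n\to\infty}$ and using the definition of $p(\varepsilon)$ gives
\begin{equation*}
\limsup_{n\to\infty} R_\infty^{(n)} \;\le\; 1 - e^{-\varepsilon\beta/\gamma}\,p(\varepsilon).
\end{equation*}
Letting $\varepsilon\downarrow 0$ and invoking the hypothesis $p(\varepsilon)\to p$, together with $e^{-\varepsilon\beta/\gamma}\to 1$, delivers the claim. I do not anticipate a genuine obstacle: the only non-routine observation is the uniform bound $L_\infty^{(n)}\le\beta/\gamma$, which decouples the control of the cumulative incidence from the dependence structure between $X_n$ and $C_n$ and thereby allows the conclusion under the stated \emph{arbitrary} dependence assumption.
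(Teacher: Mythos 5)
Your proposal is correct and follows essentially the same route as the paper: a uniform a priori bound on the cumulative incidence, restriction of the Laplace functional to the event $\{X_n\le\varepsilon\}$, and an iterated limit in $n$ and then $\varepsilon$. The only (minor) difference is that you obtain the uniform bound as $L_\infty^{(n)}\le\beta/\gamma$ directly from $1-e^{-y}\le 1$ and $\mathbb{E}[C_n]=1$, whereas the paper invokes $1-e^{-y}\le y$ together with $\sup_n\mathbb{E}[X_nC_n]<\infty$ to get $L_n\le M$; your version is arguably cleaner and makes explicit that the moment hypothesis is not needed for this bound.
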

The proposition above emphasizes that any increase in the proportion of individuals with near--zero susceptibility necessarily induces a decrease in the final epidemic size, independently of the dependence structure between susceptibility and infectivity. In particular, For Gamma and lognormal families with fixed mean and diverging variance, one has $p(\varepsilon)\to1$ and thus \(R_\infty^{(n)}\to 0\). This is clearly seen in Figure \ref{fig:final_size_dists} for large coefficient of variation. Generally, heavy-tailed distributions such as Pareto laws with fixed mean do not necessarily converge to 0 in probability. In this case, increasing variance is driven by the persistence of a non-negligible upper tail rather than by concentration near zero. As a result, even under extreme heterogeneity, a positive fraction of highly susceptible individuals remains, and the final epidemic size need not vanish (see Figure \ref{fig:final_size_dists}).
\begin{figure}[h!]
    \centering
    \includegraphics[width=1\textwidth]{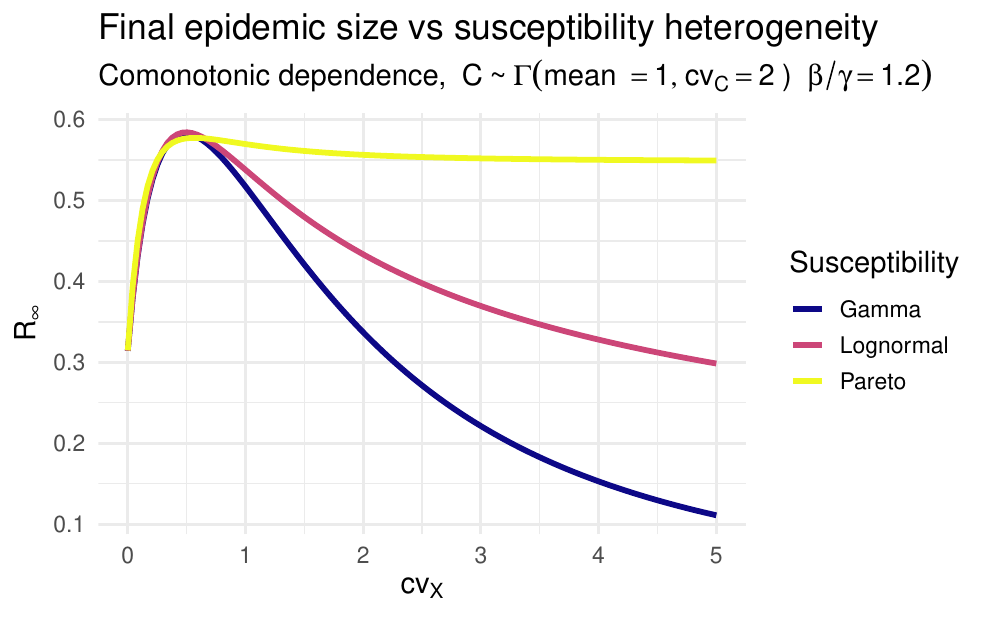}
    \caption{Final epidemic size under comonotonic dependence structure, for different susceptibility distributions with Gamma distributed infectivity.}
    \label{fig:final_size_dists}
\end{figure}

\subsection{Negative correlation does not privilege susceptibility heterogeneity}
To complete the analysis, we also prove through a counterexample that negative covariance of traits distribution does not either ensure monotonicity of the final size. For simplicity, we assume discrete traits distribution.

We define the traits random vector $(X_1,C)$ distribution by 
\begin{align}
(X_1,C)=
\begin{cases}
(0.9,\,1.4),   & \text{with probability } 0.25,\\[4pt]
(0.9,\,0.6),   & \text{with probability } 0.25,\\[4pt]
(1.1,\,1.4),   & \text{with probability } 0.25,\\[4pt]
(1.1,\,0.6),   & \text{with probability } 0.25.
\end{cases}
\end{align}
We also define another traits vector $(X_2,C)$ by
\begin{align}
(X_2,C)=
\begin{cases}
(0.9,\,1.4), & \text{with probability } 0.48,\\[4pt]
(6,\,1.4),   & \text{with probability } 0.02,\\[4pt]
(0.896,\,0.6), & \text{with probability } 0.50,
\end{cases}
\end{align}
so that $\mathbb E(X_1)=\mathbb E(X_2)=\mathbb E(C)=1$, ${\rm Cov}(X_1,C) = -0.04 < 0$, and ${\rm Cov}(X_2,C) = -0.018 < 0$. In this example, we have
\begin{align}
{\rm Var}(X_1)=0.01 < {\rm Var}(X_2)\approx 0.53.
\end{align}
However, a direct numerical solution of the fixed-point equation, for $R_0=1.3$, gives
\begin{align}
R_\infty(X_1,C)\approx 0.36\,<\,
R_\infty(X_2,C)\approx 0.42.
\end{align}
Thus, despite strictly negative susceptibility--infectivity covariance and a small rare mass,
the model with larger susceptibility variance produces a larger final epidemic size. This shows that
negative covariance alone does not guarantee monotonicity of the final size with respect to susceptibility
heterogeneity.

\begin{rek}
While Tekeli (2025) showed that the final size decreases in the susceptibility heterogeneity, say with the parameter $\sigma$(which equals to $\sqrt{1/p}$ therein), when $C$ is proportional to $X$, we emphasize that the factor $\beta/\gamma$ therein is adjusted for any $\sigma$ to maintain a fixed basic reproduction number $\frac{\beta}{\gamma}(1+\sigma^2)$ (see Tekeli 2025), while what is fixed in this paper is $R_{0}=\beta/\gamma$. 
\end{rek}

\subsection{Major outbreak probability}
\label{sec:major_outbreak}

In this section we derive the probability that a single initial infective triggers a major outbreak in the heterogeneous SIR model developed earlier, and show that it is ordered in concordance. To do so, we start by deriving early infections trait distribution.

\subsubsection{Offspring distribution}

Consider the beginning of an epidemic when $s(t,x,c)\approx\pi(x,c)$.  During its infectious period, exponentially distributed with mean $1/\gamma$, an individual with traits $(x,c)$ makes infectious contacts at instantaneous rate $\beta c$ with susceptible individuals randomly chosen from the population. A susceptible with trait $(x,c)$ is infected at rate proportional to $x$. Hence
the density of traits among newly infected individuals is the size-biased
distribution
\(
\pi^{\!*}(x,c)
=x\,\pi(x,c),
\) given $\mathbb E[X]=1$. 
Consequently, a typical infected individual in the branching process has
infectivity trait $C^{\!*}$ with probability density function \(\int \pi^{\!*}(x,c)\,dx\).
Conditional on $C^{\!*}=c$, the number of secondary infections is
$\mathrm{Poi}\!\left(\frac{\beta}{\gamma}c\right)$.
It is often convenient to express expectations with respect to $C^{\!*}$ directly
in terms of the original joint distribution of $(X,C)$ as \(\mathbb E(f(C^{\!*}))
= \mathbb E\!\big( X\,f(C)\big)\)  for any function $f$ for which the expectation exists.
\subsubsection{Major outbreak occurrence and concordance ordering}

The mean number of offspring in the branching process is
\begin{equation}\label{eq:R0_again}
\mathbb E\!\left(\frac{\beta}{\gamma}\,C^{\!*}\right)
= R_0\,\mathbb E(CX) = \overline{R}_0,
\end{equation}
coinciding with the basic reproduction number derived earlier. The generating function of the
offspring distribution can be rewritten as
\begin{equation}\label{eq:G_s_joint}
G(s) = \mathbb E\!\left( X\,\exp\!\left(\frac{\beta}{\gamma}C(s-1)\right)\right).
\end{equation}
The extinction probability $q^\ast$ is the smallest solution in $[0,1]$ of
\begin{equation}\label{eq:extinction_equation}
q = G(q)
= \mathbb E\!\left( X\,\exp\!\left(\frac{\beta}{\gamma}C(q-1)\right) \right),
\end{equation}
and the major outbreak probability is given by $p_{\mathrm{maj}}=1-q^\ast$. 
The following  result shows that $\overline{R}_0$ is the threshold for the major outbreak occurrence.
\begin{prop}\label{major_existence}~~
\begin{enumerate}
\item If $\overline{R}_0\le1$, then the only solution of \eqref{eq:extinction_equation} is $q^\ast=1$ and no major outbreak is possible, that is, $p_{\mathrm{maj}}=0.$
\item If $\overline{R}_0>1$, then the fixed-point equation \eqref{eq:extinction_equation} admits a unique solution $q^\ast<1$,
that is, $p_{\mathrm{maj}}>0.$
\end{enumerate}
\end{prop}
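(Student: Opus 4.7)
The plan is to recognize this as the standard sub/supercritical dichotomy for a single-type branching process, once one verifies that the generating function $G$ in \eqref{eq:G_s_joint} has the right shape when dependence between $X$ and $C$ is allowed. I would work with $h(s) = G(s) - s$ on $[0,1]$, whose zeros are exactly the fixed points of $G$ in that interval, and extract both parts of the proposition from strict convexity of $h$ together with its boundary values.

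The first step is to establish the analytic properties of $G$. For $s \in [0,1]$ the integrand $X\,e^{(\beta/\gamma)C(s-1)}$ is dominated by $X$, which is integrable, so $G$ is continuous on $[0,1]$ with $G(1) = \mathbb{E}[X] = 1$ and $G(0) = \mathbb{E}(X e^{-\beta C/\gamma}) > 0$, using $\mathbb{P}(X>0)=1$. For $s \in [0,1)$ the factor $e^{(\beta/\gamma)C(s-1)}$ decays exponentially in $C$, so differentiation under the expectation gives
\[
G'(s) = \tfrac{\beta}{\gamma}\,\mathbb{E}\!\left( XC\,e^{(\beta/\gamma)C(s-1)} \right), \qquad G''(s) = \left(\tfrac{\beta}{\gamma}\right)^2 \mathbb{E}\!\left( XC^2\,e^{(\beta/\gamma)C(s-1)} \right) > 0,
\]
where strict positivity of $G''$ uses $\mathbb{P}(X>0)=\mathbb{P}(C>0)=1$. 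Monotone convergence as $s \uparrow 1$ gives $G'(1^-) = (\beta/\gamma)\,\mathbb{E}(XC) = \overline{R}_0$, using only the standing assumption $\mathbb{E}(XC) < \infty$. Consequently $h$ is strictly convex on $[0,1]$ with $h(1) = 0$ and $h'(1^-) = \overline{R}_0 - 1$.

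Given these properties both cases are short. If $\overline{R}_0 \le 1$, then $h'(1^-) \le 0$ and strict monotonicity of $h'$ on $[0,1)$ (itself a consequence of strict convexity) forces $h' < 0$ throughout $[0,1)$, so $h$ is strictly decreasing on $[0,1]$ and $h(s) > 0$ for $s < 1$; the unique fixed point of $G$ in $[0,1]$ is $q^\ast = 1$ and $p_{\mathrm{maj}} = 0$. If $\overline{R}_0 > 1$, then $h'(1^-) > 0$ makes $h$ negative on a left neighborhood of $1$, while $h(0) = G(0) > 0$; the intermediate value theorem produces a zero $q^\ast \in (0,1)$. Uniqueness of this zero follows from strict convexity: a strictly convex function on $[0,1]$ has at most two zeros, and $h(1)=0$ already accounts for one of them, so $q^\ast$ is the unique fixed point in $[0,1)$ and $p_{\mathrm{maj}} = 1 - q^\ast > 0$. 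The only mild technical obstacle is justifying the differentiation step at $s=1$ without a second-moment hypothesis on $C$; this is handled by differentiating on $[0,1)$, where the exponential decay $e^{(\beta/\gamma)C(s-1)}$ provides uniform integrability for all polynomial weights in $C$, and then passing to the limit $s \uparrow 1$ via monotone convergence to identify $G'(1^-) = \overline{R}_0$.
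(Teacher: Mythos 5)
Your proof is correct and follows precisely the convexity-based branching-process argument that the paper gestures at: the paper omits the proof entirely, saying only that it is ``similar to the proof of Theorem~\ref{thm:existence}'' (which rests on concavity of the final-size map at $0$ with slope $\overline{R}_0$; the mirror image here is strict convexity of $G$ at $1$ with left derivative $\overline{R}_0$). Your write-up supplies exactly the details the paper leaves out — the domination argument justifying differentiation under the expectation on $[0,1)$, the monotone-convergence identification $G'(1^-)=\overline{R}_0$, and the two-zeros bound for the strictly convex function $h$ — so it is a complete proof in the paper's intended spirit.
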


Similar to the final size ordering, the following proposition shows that the major outbreak probability is increasing in concordance order and is maximized under comonotonic dependence. 
\begin{prop}
\label{prop:Pmajor_concord}
For fixed marginals of $(X,C)$, the major outbreak probability is monotone increasing in concordance order of the joint distribution of $(X,C)$. In addition, the largest major outbreak probability yields under the comonotonic traits dependence.
\end{prop}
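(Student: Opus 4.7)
The plan is to read the extinction equation \eqref{eq:extinction_equation} as a fixed-point problem for the convex generating function $G$ and to compare $G_1$ and $G_2$ across $(X_1,C_1)\preceq_c (X_2,C_2)$ via the concordance inequality \eqref{order_expect}. The extinction probabilities will then be compared through a convexity argument and the major outbreak ordering follows from $p_{\mathrm{maj}}=1-q^*$.

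First I would rewrite $G(q)=\mathbb{E}[\phi_q(X,C)]$ with $\phi_q(x,c)=x\exp\!\bigl(\tfrac{\beta}{\gamma}c(q-1)\bigr)$. For every $q\in[0,1)$, direct differentiation gives
\[
\frac{\partial^2\phi_q(x,c)}{\partial x\,\partial c}
\;=\;\tfrac{\beta}{\gamma}(q-1)\exp\!\Bigl(\tfrac{\beta}{\gamma}c(q-1)\Bigr)\;<\;0,
\]
so the reversed form of \eqref{order_expect} yields $G_1(q)\ge G_2(q)$ on $[0,1)$, with equality at $q=1$ since $G_i(1)=\mathbb{E}[X]=1$. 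The companion ordering $\overline{R}_0^{(1)}\le \overline{R}_0^{(2)}$ is already delivered by Proposition~\ref{prop:concord_monoton}.

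Next I would dispose of three cases using Proposition~\ref{major_existence}. If $\overline{R}_0^{(2)}\le 1$, then $\overline{R}_0^{(1)}\le 1$ as well and $p_{\mathrm{maj}}^{(1)}=p_{\mathrm{maj}}^{(2)}=0$. If $\overline{R}_0^{(1)}\le 1<\overline{R}_0^{(2)}$, the inequality $p_{\mathrm{maj}}^{(1)}=0\le p_{\mathrm{maj}}^{(2)}$ is immediate. In the remaining case both $\overline{R}_0^{(i)}>1$, so $q_i^*\in(0,1)$ and strict convexity of $G_1$, which follows from $G_1''(q)=\mathbb{E}[X(\beta C/\gamma)^2 e^{\beta C(q-1)/\gamma}]>0$ under the standing assumptions $\mathbb{P}(X>0)=\mathbb{P}(C>0)=1$, implies $G_1(q)<q$ strictly on $(q_1^*,1)$. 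Assuming for contradiction that $q_1^*<q_2^*$, evaluation at $q_2^*\in(q_1^*,1)$ would yield
\[
q_2^*\;=\;G_2(q_2^*)\;\le\;G_1(q_2^*)\;<\;q_2^*,
\]
a contradiction. Hence $q_1^*\ge q_2^*$ and $p_{\mathrm{maj}}^{(1)}\le p_{\mathrm{maj}}^{(2)}$.

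Finally, since every copula satisfies $K(u,v)\le M(u,v)$, every coupling with the prescribed marginals is dominated in concordance by the comonotonic one, so the monotonicity just established furnishes the sharp bound $p_{\mathrm{maj}}\le p_{\mathrm{maj}}^{M}$. The main obstacle I anticipate is the strictness step: the concordance inequality \eqref{order_expect} is only weak, so the contradiction must be driven by strict convexity of $G_1$, which in turn relies on $\mathbb{P}(C>0)=1$ ensuring that $G_1$ is not affine on any subinterval of $(q_1^*,1)$.
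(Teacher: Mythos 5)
Your proposal is correct and follows essentially the same route as the paper: you compare the generating functions via the negative mixed partial of $(x,c)\mapsto x\,e^{\frac{\beta}{\gamma}(s-1)c}$ and the reversed concordance inequality, then order the extinction probabilities and invoke $K\le M$ for the comonotonic bound. You additionally spell out the fixed-point comparison (the convexity/contradiction step showing $G_1\ge G_2$ implies $q_1^*\ge q_2^*$) and the subcritical cases, which the paper's proof asserts without detail.
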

To illustrate the Proposition \ref{prop:Pmajor_concord}, Figure \ref{fig:maj_prob_conc} plots the the major outbreak probability assuming Gaussian copula dependent traits, where it is clear that major outbreak probability increases in correlation as expected by \cite{tuschhoff2025heterogeneity}.
\begin{figure}[h!]
    \centering
    \includegraphics[width=.7\textwidth]{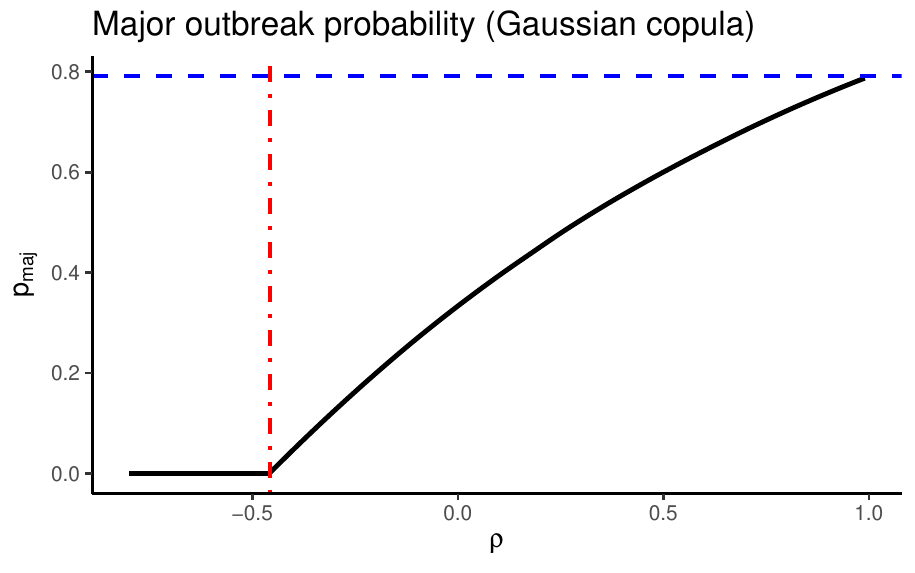}
    \caption{Major outbreak probability under a Gaussian copula dependence structure with Gamma marginal distributions, each having mean one and coefficient of variation equal to one, with $R_0=1.5$. The red dashed vertical line indicates the correlation value at which the basic reproduction number $\overline{R}_0$ becomes critical. The blue dashed horizontal line shows the distribution-free upper bound on the major outbreak probability, attained under comonotonic dependence.}
    \label{fig:maj_prob_conc}
\end{figure}

\section{Discussion}
This paper provides a distribution-free characterization of how dependence between individual susceptibility and infectivity shapes epidemic outcomes in heterogeneous SIR models. By ordering joint trait distributions through the concordance order while holding marginals fixed, we show that both the final epidemic size and the probability of a major outbreak are monotone in dependence, with comonotonic coupling yielding sharp worst-case bounds. These results clarify that epidemic severity cannot, in general, be inferred from susceptibility heterogeneity or covariance alone.

Our analysis explains and extends earlier findings obtained under independence, proportionality, or specific parametric dependence. In particular, while susceptibility heterogeneity is known to reduce final size under independence, we show that positive dependence with infectivity can locally increase epidemic size for any basic reproduction number exceeding one. Conversely, negative covariance does not guarantee suppression. These conclusions resolve apparent contradictions in the literature and demonstrate that dependence structure is a fundamental, and previously under-characterized, driver of epidemic dynamics.
The comonotonic bound derived here provides a rigorous benchmark for epidemic risk assessment when joint trait information is incomplete. Although comonotonicity represents an extreme and idealized scenario, it yields sharp upper envelopes for both early-phase and final-phase epidemic outcomes that are independent of parametric assumptions. This worst-case perspective is particularly relevant in settings where susceptibility is latent and dependence with infectivity is not identifiable from incidence data alone.

Overall, this study highlights that understanding epidemic risk under heterogeneity requires explicit consideration of dependence between traits. Ignoring this structure can lead to qualitatively incorrect conclusions, while bounding arguments based on concordance offer a principled alternative when detailed mechanistic information is unavailable.

Our results are derived for a well-mixed population with permanent immunity and do not directly extend to structured contact networks, time-varying traits, or models with partial immunity or reinfection. Extending concordance-based ordering arguments to such settings remains an open problem. Nonetheless, the present framework isolates dependence effects in their purest form and establishes limits on what can and cannot be inferred from marginal heterogeneity.

\section*{Data availability}
All data generated or analysed during this study are included in this published article and
its supplementary information files.
\bibliographystyle{apalike}
\bibliography{MyBibliography.bib}

\newpage
\appendix
\section*{Appendix}

\begin{proof}[Proof of Theorem~\ref{thm:existence}]
Define the map
\[
F(L) := \dfrac{\beta}{\gamma}\,\mathbb E\!\big[ C(1-e^{-XL})\big],\qquad L\ge0.
\]
Then $F$ is continuous and strictly increasing on $[0,\infty)$, with $F(0)=0$. In addition, for any $x,c\ge 0$, the function \(L\mapsto c\left( 1-e^{-xL}\right) \) is concave on $\mathbb{R}_+$. By linearity of the expectation, we obtain that $F$ is concave on $\mathbb{R}_+$.
The tangent at $0$ has the slope 
\begin{align}
F'(0)=  \dfrac{\beta}{\gamma}\mathbb E(CX)=\overline{R}_0.
\end{align}
Then, if $\overline{R}_0\le 1$, the unique solution to \eqref{eq:final_incid} is $L_\infty=0$ and thus the corresponding $R_\infty = 0$. Otherwise, there is a unique positive solution $L_\infty>0$ to \eqref{eq:final_incid} which results in a unique positive final attack rate $R_\infty>0$.
\end{proof}

\begin{proof}[Proof of Proposition \ref{prop:concord_monoton}]
We recall that the final--size equation satisfied by the couple $(X_j,C_j)$, $j=1,2$, is given by
\begin{equation}\label{eq:final_size_repeat}
L_\infty^{(j)} \;=\; \frac{\beta}{\gamma}\,
\mathbb{E}\!\left[\, C_j\bigl(1-e^{-X_j L_\infty^{(j)}}\bigr) \right],
\end{equation}
whose unique positive solution determines the final fraction
\[
R_\infty^{(j)} = 1 - \mathbb{E}\!\left[e^{-X_j L_\infty^{(j)}}\right].
\]
The corresponding reproduction number is given by \(\overline{R}_0^{(j)} = R_{0}\mathbb{E}(X_jC_j)\) for $j=1,2$. Since the function \(\phi(x,c)=xc\) satisfies \(\frac{\partial^2\phi(x,c)}{\partial x \partial y}=1>0\), it follows from \eqref{order_expect} that $\overline{R}_0^{(1)}\le \overline{R}_0^{(2)}$.

Assume now that $\overline{R}_0^{(1)}>1$ so both final-size equations admit unique positive solutions \(L_\infty^{(1)}>0\) and \(L_\infty^{(2)}>0\) (otherwise, the inequality follows immediately). Define for $j=1,2$,
\begin{align*}
F_j(L) := \dfrac{\beta}{\gamma}\,\mathbb E\!\big[ C_j(1-e^{-X_jL})\big],\qquad L\ge0.
\end{align*}
Consider the function \(
\varphi_L(x,c) := c\bigl(1-e^{-xL}\bigr).
\)
Then the mixed partial derivative satisfies
\[
\frac{\partial^2}{\partial x\,\partial c}\varphi_L(x,c)
= Le^{-xL} > 0.
\]
Hence, by the concordance order, we obtain for all $L\ge 0$
\[
F_1(L)\;\le\;
F_2(L),
\]
which implies that \(L_\infty^{(1)} \le L_\infty^{(2)}\).
Since $X_1$ and $X_2$ have the same marginals, we obtain
\begin{align*}
R_\infty^{(1)} \le R_\infty^{(2)}.
\end{align*}
\end{proof}

\begin{proof}[Proof of Proposition~\ref{prop:variance_effect}]

Let $L_\infty(\varepsilon)$ denote the solution of the final--size equation
\begin{equation}\label{eq:final_size_eps}
L_\infty(\varepsilon)
= \frac{\beta}{\gamma}\,
\mathbb{E}\!\left[C\bigl(1-e^{-L_\infty(\varepsilon) X_\varepsilon}\bigr)\right],
\end{equation}
and let $R_\infty(\varepsilon)=1-\mathbb{E}\!\left[e^{-X_\varepsilon L_\infty(\varepsilon)}\right]$ be the corresponding attack rate. Differentiating and taking the expectation yields
\begin{equation}\label{eq:Rprime_formula}
R_\infty'(0)
=
e^{-L_\infty(0)} L_\infty'(0).
\end{equation}
with $L_\infty(0)>0$ provided that $R_0= \beta/\gamma\, > 1$, corresponding to the homogeneous case $X=1$.
Define
\[
F(L,\varepsilon)
:= L - \frac{\beta}{\gamma}\,
\mathbb{E}\!\left[C\bigl(1-e^{-LX_\varepsilon}\bigr)\right].
\]
Then $L_\infty(\varepsilon)$ is characterized implicitly by
\(
F(L_\infty(\varepsilon),\varepsilon)=0.
\)
To compute $L_\infty'(0)$, we use the implicit function theorem 
\begin{equation}\label{eq:Lprime_general}
L_\infty'(0)
=
-\frac{\partial_\varepsilon F(L_\infty(0),0)}{\partial_L F(L_\infty(0),0)}.
\end{equation}
From $X_\varepsilon = 1 + \varepsilon Z$, we obtain
\(
\partial_\varepsilon e^{-L_\infty(0) X_\varepsilon}\big|_{\varepsilon=0}
= -L_\infty(0) Z e^{-L_\infty(0)},
\)
which implies that
\begin{align*}
\partial_\varepsilon F(L_\infty(0),0)
= 
\frac{\beta}{\gamma}\,
\mathbb{E}\!\left[
C\bigl(\partial_\varepsilon e^{-L_\infty(0) X_\varepsilon}\bigr)\big|_{\varepsilon=0}
\right]
= -
\frac{\beta}{\gamma}\,L_\infty(0) e^{-L_\infty(0)}\, \mathbb{E}(C Z).
\end{align*}
In addition, we have
\[
\partial_L F(L,0)
=
1 - \frac{\beta}{\gamma}\,\mathbb{E}\!\left[C e^{-L}\right].
\]
Then, evaluating at $L=L_\infty(0)$, we obtain
\begin{equation}\label{eq:denominator_negative}
\partial_L F(L_\infty(0),0)
= 1 - \frac{\beta}{\gamma}e^{-L_\infty(0)}.
\end{equation}
Since $L_\infty(0)=\frac{\beta}{\gamma}\left( 1-e^{-L_\infty(0)}\right) $, one can obtain that
\begin{align*}
1 - \frac{\beta}{\gamma}e^{-L_\infty(0)} = 1 - \frac{L_\infty(0)e^{-L_\infty(0)}}{1-e^{-L_\infty(0)}} >0.
\end{align*}
Thus, from equation \eqref{eq:Lprime_general}, $L_\infty'(0)$ is proportional to \(\mathbb{E}(CZ)={\rm Cov}(X,C)\) and thus is $R_\infty'(0)$. Particularly,
\[
\mathrm{sign}\bigl(R_\infty'(0)\bigr)
=\mathrm{sign}\left( {\rm Cov}(X,C)\right) .
\]
\end{proof}

\begin{proof}[Proof of Proposition \ref{prop:large_variance}]
Recall that the final size satisfies
\[
R_\infty^{(n)} = 1-\mathbb{E}\!\left(e^{-L_n X_n}\right),
\]
where $L_n$ is the unique solution of
\[
L_n = \frac{\beta}{\gamma}\,
\mathbb{E}\!\left[C_n\bigl(1-e^{-X_n L_n}\bigr)\right].
\]
Using the inequality $0\le 1-e^{-y}\le y$ for $y\ge0$ and the uniform boundedness of $\mathbb{E}(X_n C_n)$, we obtain
\[
L_n \le \frac{\beta}{\gamma}\,\mathbb{E}[X_n C_n] \le M,
\]
for some constant $M>0$. Hence
\begin{align*}
\mathbb{E}\!\left(e^{-L_n X_n}\right)
&\;\ge\;
\mathbb{E}\!\left(e^{-M X_n}\right)\\
&\;\ge\; \mathbb{E}\!\left(e^{-M X_n}\mathbf{1}_{\{X_n\le\varepsilon\}}\right)\\
&
\;\ge\;
e^{-M\varepsilon}\,\mathbb{P}(X_n\le\varepsilon),
\end{align*}
for any $\varepsilon>0$.
Letting $n\to\infty$ and using the assumption that \( p(\varepsilon)\to p\in[0,1]\) as $\varepsilon\to 0$, we obtain
\[
\liminf_{n\to\infty}\mathbb{E}\!\left(e^{-L_n X_n}\right)\ge p.
\]
Therefore,
\[
\limsup_{n\to\infty} R_\infty^{(n)}
=
1-\liminf_{n\to\infty}\mathbb{E}\!\left(e^{-L_n X_n}\right)
\;\le\; 1-p,
\]
which completes the proof.
\end{proof}

\begin{proof}[Proof of Theorem \ref{major_existence}]
The proof is similar to the proof of Theorem \ref{thm:existence} so we omit it.
\end{proof}

\begin{proof}[Proof of Proposition \ref{prop:Pmajor_concord}]
Fix $s\in(0,1)$ and define the map
\begin{align*}
f_s(x,c) := x\,e^{\frac{\beta}{\gamma}(s-1)\, c},\quad\forall\, x,c\ge 0.
\end{align*}
Then from \eqref{eq:G_s_joint} we have
\[
G(s) = \mathbb E\!\big(f_{s}(X,C)\big).
\]
Note that
\[
\frac{\partial^2}{\partial x\,\partial c} f_s(x,c)
= \frac{\beta}{\gamma}(s-1) e^{\frac{\beta}{\gamma}(s-1)\, c} < 0,\quad\forall s\in(0,1).
\]
Let $(X_1,C_1)$ and $(X_2,C_2)$ be two bivariate random vectors coupled via copulas $K_1$ and $K_2$, respectively, with the same marginals. Assume that $(X_1,C_1)\preceq_{c}(X_2,C_2)$. That is, $K_1\le K_2$ on $[0,1]^2$. Then the corresponding probability generating functions $G^{(1)}$ and $G^{(2)}$, respectively, satisfy
\begin{align*}
G^{(1)}(s) \ge G^{(2)}(s),\quad\forall\, s\in(0,1).
\end{align*}
Then the corresponding fixed points $q^{(1)}$ and $q^{(2)}$ (the extinction probabilities) respectively, satisfy $q^{(1)} \ge q^{(2)}$. That is, the major outbreak probabilities are ordered as
\begin{align}
p_{\rm maj}^{(1)} = 1-q^{(1)} \le 1-q^{(2)} = p_{\rm maj}^{(2)}.
\end{align}
The maximal major outbreak probability is achieved when $K_2=M$, on $[0,1]^2$ with $M$ is the comonotonic copula. This completes the proof.
\end{proof}

\end{document}